\documentclass[11pt]{article}

\usepackage[margin=1.05in]{geometry}
\usepackage{amsmath,amssymb,amsthm}
\usepackage{graphicx}
\usepackage{booktabs}
\usepackage{algorithm}
\usepackage{algpseudocode}
\usepackage[round]{natbib}
\usepackage[table]{xcolor}
\definecolor{linkink}{RGB}{28,64,132}
\definecolor{cmpbetter}{RGB}{42,120,214}
\definecolor{cmpworse}{RGB}{227,73,72}
\usepackage[colorlinks=true,linkcolor=linkink,citecolor=linkink,urlcolor=linkink]{hyperref}
\usepackage{microtype}
\usepackage[section]{placeins}
\graphicspath{{figs/}}

\newtheorem{lemma}{Lemma}
\newtheorem{proposition}{Proposition}

\newcommand{\CD}{\mathrm{CD}}
\newcommand{\WD}{\mathrm{WD}}
\newcommand{\MD}{\mathrm{MD}}
\newcommand{\ASD}{\mathrm{ASD}}

\title{LAT: a Latinized aperiodic tiling for any sample size}

\author{Pamphile T.\ Roy\\
\small Consulting Manao GmbH, Vienna, Austria\\
\small \texttt{roy.pamphile@gmail.com}}

\date{}

\begin{document}
\maketitle

\begin{abstract}
Quasi-Monte Carlo methods allow computer experiments to be run with far
fewer simulations than crude Monte Carlo. A digital net in base two, such
as Sobol', is however only balanced when the number of samples is a
power of two, and Latin Hypercube Sampling (LHS) accepts any sample size
but only controls the one-dimensional margins. This work proposes a
space-filling design defined for any sample size, referred to as LAT for
\emph{Latinized aperiodic tiling}. The unit hypercube is cut recursively
across its longest edge following the golden section into $N$ cells of
equal volume. One point is then placed in each cell, and the margins are
made Latin while every point stays inside its own cell. The construction
only uses integer splits and costs $O(dN \log N)$. The recursion is
shown to follow the Fibonacci word and its aperiodicity is analysed. LAT
is assessed with four $L_2$-discrepancies and with the integration error
on analytical functions and engineering emulators, and compared to Monte
Carlo, LHS, Halton, Sobol' and a rank-1 lattice. LAT is better than Monte
Carlo and LHS as soon as the integrand has interactions. Sobol' remains
more accurate at the powers of two, but its error is one to two orders of
magnitude larger at other sample sizes. The accuracy of LAT does not
depend on the sample size. Finally, the cells form a partition of the
hypercube for any $N$. This allows one to refine the design locally, to
search for an optimum by splitting cells and to sample non-rectangular
regions.
\end{abstract}

\noindent\textbf{Keywords:} computer experiments; space-filling design;
Latin hypercube sampling; quasi-Monte Carlo; stratified sampling; golden
ratio; discrepancy.

\section{Introduction}
\label{sec:intro}

One of the main objectives when performing costly numerical experiments
is to learn the behaviour of a quantity of interest with as few runs as
possible. The number of samples is noted $N$ and the dimension $d$. As
$N$ is often small for the dimension $d$, the Design of Experiments (DoE)
should fill the space as evenly as possible
\citep{sacks1989,johnson1990,fang2006,joseph2016}. Two families of
methods are classically used. Latin Hypercube Sampling (LHS) guarantees
uniform one-dimensional margins \citep{mckay1979}, and low discrepancy
sequences such as Halton and Sobol' control the joint distribution of the
points \citep{halton1960,sobol1967,niederreiter1992,dickeltz2010}. Both
are readily available, for instance in \texttt{scipy.stats.qmc}
\citep{roy2023}. Each has a limitation in practice. LHS only controls the
margins. A digital net in base two is balanced only when $N$ is a power
of two, whereas the number of runs one can afford is set by the cost of
the model.

Aperiodic tilings offer another way to fill the space. The Penrose and
Ammann--Beenker tilings, and more generally the cut-and-project sets they
come from, fill the plane with a few prototiles and never repeat,
although they present long-range order
\citep{baakegrimm2013,senechal1995,debruijn1981}. Placing one point per
tile is then a natural way to build a design, and the tile areas are
quadrature weights. This idea has been used in computer graphics
\citep{ostromoukhov2004,kopf2006,ahmed2016}, in the plane only. Cutting
the space into cells of equal volume for an arbitrary $N$, and combining
such a stratification with Latin margins, both have antecedents.
Recursive stratified integration splits boxes adaptively
\citep{press1990}, \citet{saka2007} Latinize centroidal Voronoi point sets
and \citet{shields2016} impose Latin margins on user-chosen strata.
\citet{he2016} stratify the unit interval along a space-filling curve and
\citet{clement2024} build equal-volume partitions from slabs. The closest
antecedent is the generalized stratified sampling of \citet{wessing2017}.
It cuts the longest edge of each box with a balanced allocation of the
points and Latinizes the result by a bipartite matching. The golden ratio
already generates point sets, such as the Fibonacci lattice
\citep{sloanjoe1994}, the sequence $R_d$ \citep{roberts2018} and nets in
base $\varphi$ \citep{kirk2025}, but no partition. To the best of our
knowledge, these Latinized stratifications have not been compared with
the scrambled sequences and a lattice rule at an arbitrary number of
samples, and their cells have not been used to refine a design, to
search for an optimum or to sample a non-rectangular region.

This work proposes a space-filling design of this family, named LAT for
\emph{Latinized aperiodic tiling}. This is a three-step process: (i) the
unit hypercube $[0,1]^d$ is recursively cut across its longest edge into
cells of equal volume following the golden section, (ii) one point is
placed in each cell, and (iii) the margins are Latinized by matching the
cells to the Latin bins, every point staying inside its own cell. The
construction is short to implement. It only uses integer splits, and the
cells can be built independently of each other. LAT differs from
\citet{wessing2017} by the golden cut, whose recursion has an exact
integer description at every sample size. It is also shown that Latin
margins and a point uniform in its cell are compatible, so that such a
design can be made unbiased. The cells of LAT are a partition of the
hypercube at any $N$, and the applications rely on this property. A
design can be refined locally while the rest of the sample is left
untouched, the cells can be split to search for an optimum, and a
non-rectangular region can be sampled by keeping the cells with their
centre inside it. These operations are usually handled by sequential
designs \citep{jones1998,crombecq2011,garud2017,liu2018,fuhg2021,roy2020}
and by designs optimized on the region
\citep{lekivetz2015,draguljic2012}.

The paper is organized as follows. Sect.~\ref{sec:criteria} presents the
quality criteria. Sect.~\ref{sec:construction} presents the construction
and its aperiodicity, and Sect.~\ref{sec:random} its randomization.
Sects.~\ref{sec:discrepancy} and~\ref{sec:integration} assess the
discrepancy and the integration error. Sect.~\ref{sec:sequential}
presents the local refinement, the global optimization and the
non-rectangular regions. Finally, conclusions and perspectives are drawn
in Sect.~\ref{sec:conclusion}.

\section{Quality criteria}
\label{sec:criteria}

Two families of criteria are used throughout this work: four
$L_2$-discrepancies, and the integration error on test functions with a
known integral. The centred ($\CD$), wrap-around ($\WD$) and mixture
($\MD$) discrepancies \citep{hickernell1998,zhou2013} measure how far the
empirical distribution of the points is from the uniform one. For a
design $X = \{x_i\}_{i=1}^{N} \subset [0,1]^d$ with coordinates
$x_{ij}$, writing $b_{ij} = \lvert x_{ij} - \tfrac12 \rvert$ and
$c_{ijk} = \lvert x_{ij} - x_{kj}\rvert$, where $i, k$ run over the $N$
points and $j$ over the $d$ coordinates,
\begin{align}
\CD^2(X) &= \Bigl(\tfrac{13}{12}\Bigr)^{\!d}
- \frac{2}{N}\sum_{i}\prod_{j}
  \Bigl(1 + \tfrac12 b_{ij} - \tfrac12 b_{ij}^{2}\Bigr)
+ \frac{1}{N^{2}}\sum_{i,k}\prod_{j}
  \Bigl(1 + \tfrac12 b_{ij} + \tfrac12 b_{kj}
        - \tfrac12 c_{ijk}\Bigr),
\label{eq:cd}\\
\WD^2(X) &= -\Bigl(\tfrac{4}{3}\Bigr)^{\!d}
+ \frac{1}{N^{2}}\sum_{i,k}\prod_{j}
  \Bigl(\tfrac32 - c_{ijk}\,(1 - c_{ijk})\Bigr),
\label{eq:wd}\\
\MD^2(X) &= \Bigl(\tfrac{19}{12}\Bigr)^{\!d}
- \frac{2}{N}\sum_{i}\prod_{j}
  \Bigl(\tfrac53 - \tfrac14 b_{ij} - \tfrac14 b_{ij}^{2}\Bigr)
+ \frac{1}{N^{2}}\sum_{i,k}\prod_{j}
  \Bigl(\tfrac{15}{8} - \tfrac14 b_{ij} - \tfrac14 b_{kj}
        - \tfrac34 c_{ijk} + \tfrac12 c_{ijk}^{2}\Bigr).
\label{eq:md}
\end{align}
The wrap-around discrepancy is invariant to a shift of each coordinate
modulo one, and the mixture discrepancy corrects some known weaknesses of
the two others \citep{zhou2013}. The fourth criterion is the average
squared discrepancy ($\ASD$) of \citet{clement2025}.
\citet{matousek1998} noted that the $L_2$ star discrepancy favours the
origin, to the point that $N$ copies of the vertex $(1, \ldots, 1)$ can
have a smaller value than $N$ independent uniform points. The $\ASD$
averages the squared $L_2$ star discrepancy over the $2^d$ vertices at
which it can be anchored. The average acts coordinate by coordinate and
gives
\begin{equation}
\ASD^2(X) = \Bigl(\tfrac13\Bigr)^{\!d}
- \frac{2}{N}\sum_{i}\prod_{j} \frac{1 + 2x_{ij}(1 - x_{ij})}{4}
+ \frac{1}{N^{2}}\sum_{i,k}\prod_{j} \frac{1 - c_{ijk}}{2}
\label{eq:asd}
\end{equation}
in $O(dN^2)$ operations. It is $4^{-d}$ times the symmetric discrepancy
of \citet{hickernell1998} with a weight of $4$ in every coordinate
\citep{clement2025}. The centred discrepancy has a similar weakness in
high dimension. $N$ copies of the centre of the cube give
$\CD^2 = (13/12)^d - 1$, whereas $N$ independent uniform points give
$((5/4)^d - (13/12)^d)/N$ in expectation. At $d = 30$ the copies are
thus the better design for this criterion whenever $N \le 79$. The first
three criteria are computed with \texttt{scipy.stats.qmc}
\citep{roy2023}. The values reported are the mean of the root
discrepancy over independent randomizations. The integration error is the
root-mean-square error (RMSE) of the sample mean on the test functions of
Sect.~\ref{sec:funcs}.

In the tables, each cell is coloured according to its ratio to LAT on the
same case, in blue when the method is better than LAT and in red when it
is worse. A case is a single row, or a single column when the methods are
the rows. The colour is proportional to the logarithm of the ratio, LAT
being left uncoloured, and it compares the methods within a case and not
across cases.

\section{Presentation of the method}
\label{sec:construction}

\subsection{Classical aperiodic tilings}
\label{sec:wall}

The most natural way to build a design from an aperiodic tiling is to
use the classical tilings themselves. Fig.~\ref{fig:prototiles} shows the
pentagrid Penrose tiling, the Ammann--Beenker tiling and a codimension-one
cut-and-project tiling. Each has one jittered point per tile, is cropped
to the unit square and is randomized through the translation of its
acceptance window \citep{baakegrimm2013}. Without Latinization, Penrose
and Ammann--Beenker are at the level of LHS in terms of discrepancy, with
slopes close to $-0.67$. In their standard orientation the tile centroids
also lie on a few hundred lines per coordinate, and on the type A
function of Sect.~\ref{sec:funcs} Penrose only reaches the Monte Carlo
rate. Once their margins are Latinized, the three tilings come within a
factor $1.3$--$1.5$ of LAT in terms of centred discrepancy. Hence, the
Latin margins do most of the work. Two difficulties remain. The number of
points kept by the crop fluctuates from one realization to the other, so
the construction is not defined at a given sample size, and the tiles cut
by the hypercube lose the constant volume of their prototile. In higher
dimension the tiling has to be grown in an ambient space and then
cropped, and $99\%$ of the points are discarded at $d = 5$. The golden
partition avoids all of it, as it tiles the hypercube itself in any
dimension.

\begin{figure}[tbp]
\centering
\includegraphics[width=0.62\textwidth]{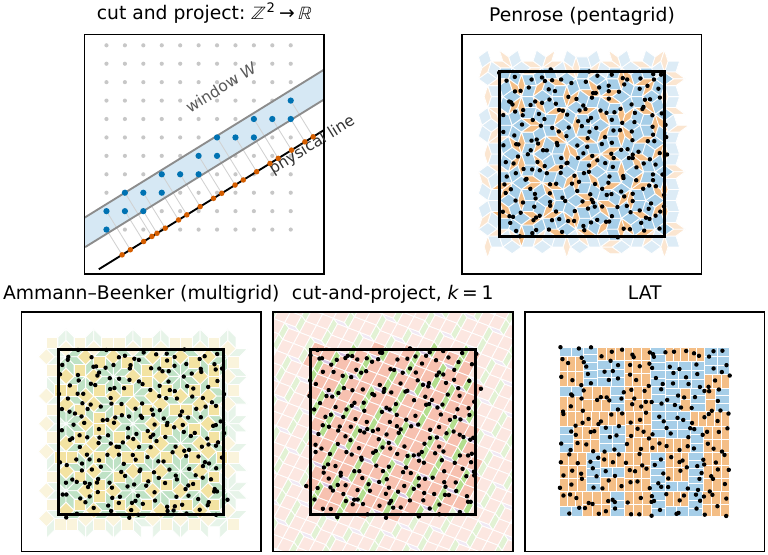}
\caption{The classical tilings. Top left, the cut-and-project mechanism:
the points of $\mathbb{Z}^2$ whose internal projection falls in the window
$W$ are projected onto the physical line, and give the Fibonacci chain.
An internal dimension $k$ gives Penrose ($k = 3$) and Ammann--Beenker
($k = 2$). One jittered point per tile, $N = 250$ points, the tiles
outside the crop box dimmed. LAT tiles the square itself.}
\label{fig:prototiles}
\end{figure}

\subsection{A golden-section tiling of the cube}
\label{sec:golden}

The general idea is to partition the unit hypercube recursively
(Algorithm~\ref{alg:lat}). Fig.~\ref{fig:construction} shows the
construction round by round. A cell that has to hold $j$ points is cut
across its longest edge, at the fraction $j_1/j$ of that edge, into two
cells holding $j_1 = \mathrm{round}(j/\varphi)$ and $j - j_1$ points,
where $\varphi = (1+\sqrt5)/2$ is the golden ratio. The recursion stops
when each cell holds a single point. Each of the $N$ leaf cells then has
a volume of $1/N$, whatever $N$ and $d$. The aspect ratios remain
bounded. All the cut fractions lie in $[\tfrac13, \tfrac23]$, the extreme
values being reached for $j = 3$, $6$ and $9$. An induction on the tree
then shows that cutting the longest edge under this constraint keeps the
ratio between the longest and the shortest edges at most three. The
diameter of the cells is thus $O(N^{-1/d})$, a property used in
Sect.~\ref{sec:rate}.

\begin{figure}[tbp]
\centering
\includegraphics[width=0.74\textwidth]{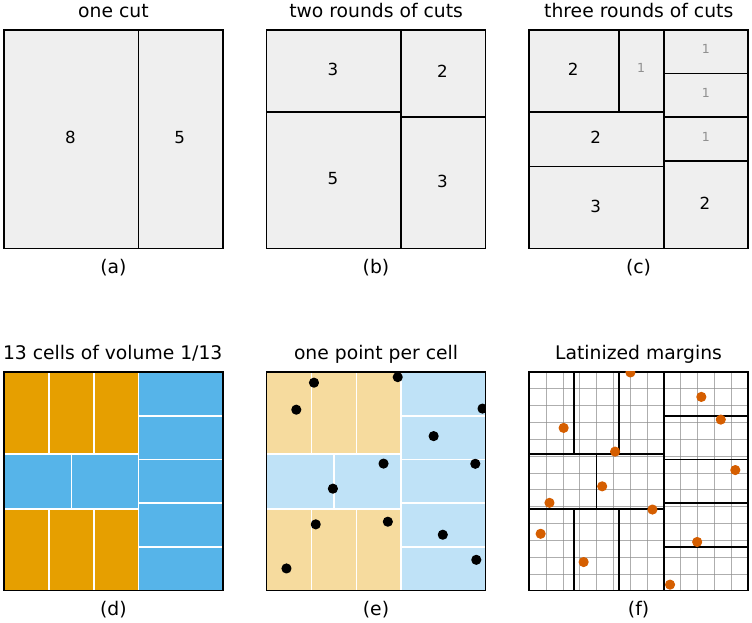}
\caption{The construction at $N = 13$. (a) The first cut gives $8$ and
$5$. (b) The same rule gives $5$ and $3$ from the $8$, and $3$ and $2$
from the $5$. (c) One round later, the cells that already hold a single
point are greyed. (d) After five rounds the $13$ cells have a volume of
$1/13$, coloured by orientation, wide in blue and tall in orange. (e) One
point is placed uniformly at random in each cell. (f) The Latinized
design: every point is inside its cell, and each row and each column of
the $13 \times 13$ grid holds exactly one point. The canonical partition
is drawn here, and its random exchanges are the subject of
Sect.~\ref{sec:random}.}
\label{fig:construction}
\end{figure}

\begin{algorithm}[tbp]
\caption{LAT: golden-section tiling of $[0,1]^d$ with one point per cell.}
\label{alg:lat}
\begin{algorithmic}[1]
\Require sample size $N$, dimension $d$
\State initialise a stack with the cell $\bigl([0,1]^d,\, N\bigr)$
\While{a cell $(C, j)$ on the stack has $j > 1$}
  \State $j_1 \gets \min\!\bigl(\max(\mathrm{round}(j/\varphi), 1),\, j-1\bigr)$
  \State cut $C$ across its longest edge at the fraction $j_1/j$
  \State replace $(C,j)$ by the two children $(C_1, j_1)$ and $(C_2, j-j_1)$
\EndWhile
\State match the cells to the $N$ Latin bins in each coordinate and place one point per cell inside its cell and its bins
\end{algorithmic}
\end{algorithm}

The construction is short to implement. It only involves integer splits
and one multiplication per cut, and it needs neither a table of
direction numbers, nor a generating vector, nor an optimization loop. The
tree has $2N - 1$ nodes and each cut costs $O(d)$, so the partition costs
$O(dN)$ operations and memory. The larger child holds about $N/\varphi$
points, so the depth of the tree is about $\log N / \log\varphi$, i.e.,
$18$ at $N = 8192$. Two properties follow. First, the cells of a level
do not depend on each other. A whole level is thus cut in a single
vectorized operation, and the loop only runs over the depth of the tree.
Second, the number of points of each child only depends on the integers.
The cell of a given index is thus found by a descent from the root in
$O(d \log N)$ operations, without building the tree, with the random
exchanges of Sect.~\ref{sec:random} drawn from a stream indexed by the
node. The cells can then be generated in parallel or on demand, as the
refinement of Sect.~\ref{sec:refine} does. The Latinization of
Sect.~\ref{sec:random} is independent from one coordinate to the other
and costs $O(N \log N)$ per coordinate, so the whole design costs
$O(dN \log N)$. In comparison, a scrambled Sobol' sequence costs $O(dN)$
but relies on tables of direction numbers, a lattice rule needs a
generating vector searched component by component \citep{nuyenscools2006},
and the optimized Latin hypercubes of Sect.~\ref{sec:discrepancy} are
iterative searches.

Since the cut fractions follow the golden section, the cells do not align
on a grid for most sample sizes. Some sizes are however degenerate. In
$d = 2$, the recursion gives an exact regular grid for each $N = F_m^2$
($4$, $9$, $25$, $64$, $169$, \ldots), with $F_m$ the Fibonacci numbers,
as well as for $N = 2$ and $6$. Indeed, the proof of Lemma~\ref{lem:round} below gives
$F_m^2/\varphi - F_m F_{m-1} = F_m \psi^{m-1}/\varphi$, of absolute value
$F_m \varphi^{-m} < \tfrac12$ for $m \ge 3$. Hence
$\mathrm{round}(F_m^2/\varphi) = F_m F_{m-1}$ and the recursion splits an $F_m \times F_m$ grid into two grids of the same
kind. An exhaustive search finds no other case below $4096$ in $d = 2$,
and only $N = 2$ and $4$ in $d = 3$, $4$ and $5$. For these sizes the
random draw of the tiling has no effect. $N = 64$ is the first size of
the integration studies below, and removing it moves the fitted slopes of
LAT by at most $0.03$ in $d = 2$.

\subsection{Aperiodicity}
\label{sec:aperiodic}

Let us now show that the split recursion is aperiodic. In the following,
$F_m$ denotes the Fibonacci numbers, with $F_1 = F_2 = 1$.

\begin{lemma}
\label{lem:round}
For all $m \ge 2$, $\;\lvert F_m/\varphi - F_{m-1}\rvert = \varphi^{-m} <
\tfrac12$, so $\mathrm{round}(F_m/\varphi) = F_{m-1}$.
\end{lemma}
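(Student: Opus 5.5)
The plan is to use Binet's formula and reduce the statement to an identity in the golden ratio and its conjugate. Write $\psi = (1-\sqrt5)/2 = -1/\varphi$, so that $F_m = (\varphi^m - \psi^m)/\sqrt5$ for all $m \ge 1$ with $F_1 = F_2 = 1$. I will compute $F_m/\varphi - F_{m-1}$ exactly rather than just bound it. This exact form is also what gives the remark in Sect.~\ref{sec:golden}: multiplying by $F_m$ yields $F_m^2/\varphi - F_m F_{m-1} = F_m\psi^{m-1}/\varphi$.

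\emph{Step 1.} Substitute Binet's formula:
\[
\sqrt5\,\Bigl(\frac{F_m}{\varphi} - F_{m-1}\Bigr)
= \varphi^{m-1} - \frac{\psi^m}{\varphi} - \varphi^{m-1} + \psi^{m-1}
= \psi^{m-1}\Bigl(1 - \frac{\psi}{\varphi}\Bigr).
\]
The $\varphi^{m-1}$ terms cancel, and that cancellation is the whole point of dividing by $\varphi$.

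\emph{Step 2.} Simplify the bracket. Since $\psi/\varphi = -\varphi^{-2}$, we get $1 - \psi/\varphi = (\varphi^2+1)/\varphi^2$. Using $\varphi^2 = \varphi + 1$, this gives $\varphi^2 + 1 = \varphi + 2 = (5+\sqrt5)/2 = \sqrt5\,\varphi$. Hence the bracket equals $\sqrt5/\varphi$, and
\[
\frac{F_m}{\varphi} - F_{m-1} = \frac{\psi^{m-1}}{\varphi}.
\]
Its absolute value is $\varphi^{-(m-1)}\varphi^{-1} = \varphi^{-m}$.

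\emph{Step 3.} For $m \ge 2$ we have $\varphi^{-m} \le \varphi^{-2} = 2 - \varphi \approx 0.382 < \tfrac12$. Thus $F_m/\varphi$ lies strictly within distance $\tfrac12$ of the integer $F_{m-1}$, so it is not a half-integer and $\mathrm{round}(F_m/\varphi) = F_{m-1}$ unambiguously. The only step needing care is the algebraic simplification $\varphi^2 + 1 = \sqrt5\,\varphi$ in Step 2. As a sanity check at $m = 2$: $1/\varphi - 1 = -\varphi^{-2}$, which is correct.
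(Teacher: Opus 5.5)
Your proof is correct and follows the paper's argument: Binet's formula, cancellation of the $\varphi^{m-1}$ terms to get $\psi^{m-1}(1-\psi/\varphi)/\sqrt5$, and simplification of the bracket to $\sqrt5/\varphi$. The only difference is cosmetic. You use $\varphi^2+1=\sqrt5\,\varphi$, whereas the paper writes $1-\psi/\varphi=3-\varphi$ and uses $(3-\varphi)\varphi=2\varphi-1=\sqrt5$.
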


\begin{proof}
With $\psi = -1/\varphi$ and $F_m = (\varphi^m - \psi^m)/\sqrt5$,
\[
\frac{F_m}{\varphi} - F_{m-1}
= \frac{\psi^{m-1}}{\sqrt5}\Bigl(1 - \frac{\psi}{\varphi}\Bigr)
= \frac{\psi^{m-1}}{\sqrt5}\,(3 - \varphi)
= \frac{\psi^{m-1}}{\varphi},
\]
since $(3-\varphi)\varphi = 2\varphi - 1 = \sqrt5$. Its absolute value is
$\varphi^{-m} \le \varphi^{-2} < \tfrac12$.
\end{proof}

By Lemma~\ref{lem:round}, when $N = F_m$ a cell holding $F_i$ points is
split into two cells holding $F_{i-1}$ and $F_{i-2}$ points. The
recursion is thus the Fibonacci tree, and its structure is the one of the
Fibonacci word \citep{lothaire2002}. The split sizes are recorded as a
word $W(j)$. The leaves of a tree of size $j$ are read from left to
right, with the letter $b$ for the singleton produced whenever a cell of
size three is cut and the letter $a$ otherwise.

\begin{proposition}
\label{prop:fib}
For $m \ge 5$, $\,W(F_m) = W(F_{m-1})\,W(F_{m-2})$, the number of letters $b$
in $W(F_m)$ is $F_{m-3}$, and the frequency of $b$ tends to $\varphi^{-3}$,
which is irrational. Thus, the recursion word is not eventually
periodic.
\end{proposition}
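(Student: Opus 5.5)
Three ingredients carry the argument: Lemma~\ref{lem:round} to pin down the split sizes, the ordering of the children to get the concatenation, and a counting recurrence for the letters $b$.

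\emph{Split sizes.} By Lemma~\ref{lem:round}, for every $i \ge 3$ a cell holding $F_i$ points has $j_1 = \mathrm{round}(F_i/\varphi) = F_{i-1}$. The clamp in Algorithm~\ref{alg:lat} is inactive, since $1 \le F_{i-1} \le F_i - 1$. The second child therefore holds $F_i - F_{i-1} = F_{i-2}$ points.

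By induction on $i$, every node of the tree rooted at $F_m$ is a Fibonacci number. Reading leaves left to right, with the $j_1$ child first, the leaf word of a cell depends only on its size. This gives $W(F_i) = W(F_{i-1})\,W(F_{i-2})$ whenever both children are themselves internal trees whose letters are assigned by the same rule.

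The only care needed is at the bottom of the tree, where the letters are defined.
\begin{itemize}
\item $F_4 = 3$ splits into $2$ and $1$. The singleton produced there is $b$.
\item The cell of size $2$ splits into two singletons, which are $a$'s.
\item A cell of size $F_3 = 2$ splits into $1$ and $1$.
\end{itemize}
So $W(2) = aa$, $W(3) = aab$ and $W(5) = W(3)W(2) = aabaa$.

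The concatenation identity holds as soon as both children have size at least $2$, that is $F_{m-2} \ge 2$, or $m \ge 5$. This explains the hypothesis. For $m = 4$ it would fail, because the lone singleton child of $3$ is labelled $b$ rather than carrying a word $W(1) = a$.

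I expect this bookkeeping of the labelling convention, namely which singleton counts as $b$, to be the main (mild) obstacle. I would state the convention explicitly as "$b$ is the leaf that is the second child of a size-three node." The recursion is then a purely structural consequence of the tree being self-similar.

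\emph{Counting.} Let $B_m$ be the number of $b$'s in $W(F_m)$. The concatenation gives $B_m = B_{m-1} + B_{m-2}$ for $m \ge 5$. The initial values are $B_4 = 1 = F_1$ and $B_5 = 1 = F_2$, with $B_6$ checked directly if needed. Hence $B_m = F_{m-3}$. The word has length $F_m$, so by Binet's formula the frequency is
\[
\frac{F_{m-3}}{F_m} \;\to\; \varphi^{-3} = \sqrt5 - 2,
\]
which is irrational.

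\emph{Aperiodicity.} The words $W(F_{m-1})$ are prefixes of $W(F_m)$, so they define a limiting infinite word $W_\infty$ in which $b$ has frequency $\varphi^{-3}$ along the prefixes of lengths $F_m$. Suppose $W_\infty$ were eventually periodic with period $p$, containing $r$ letters $b$ per period. Then the count of $b$ in any prefix of length $n$ would be $rn/p + O(1)$. Its frequency would therefore converge to the rational number $r/p$. This contradicts the irrational limit $\varphi^{-3}$, so the recursion word is not eventually periodic.
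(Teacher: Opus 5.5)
Your proof is correct and follows the paper's argument. Lemma~\ref{lem:round} applied at the root gives the concatenation, the $b$-count satisfies the Fibonacci recurrence $B_m = F_{m-3}$, and an eventually periodic word would have a rational limiting frequency, which contradicts $\varphi^{-3}$. What you add is bookkeeping the paper leaves implicit: why $m \ge 5$ is needed (the singleton child of a size-three cell carries $b$ rather than $W(1)$), and why a limit taken along the prefix lengths $F_m$ suffices for the contradiction.
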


\begin{proof}
The concatenation is Lemma~\ref{lem:round} applied at the root. It makes
each $W(F_{m-1})$ a prefix of $W(F_m)$, and the infinite recursion word
is well defined as their limit. The count of $b$ obeys
$B_m = B_{m-1} + B_{m-2}$ with $B_3 = 0$ and $B_4 = 1$, so $B_m = F_{m-3}$
and $B_m/F_m \to \varphi^{-3}$. A word that is eventually periodic has a
rational letter frequency, and $\varphi^{-3}$ is irrational.
\end{proof}

Proposition~\ref{prop:fib} concerns the infinite word. A design uses a
finite word of length $N$, and the relevant quantity is then its shortest period. This period is computed exactly. It is $F_{m-1}$ at every Fibonacci
size $N = F_m$ up to $6765$ and $\mathrm{round}(N/\varphi)$ at
$N = 1000$, $1024$ and $5000$. It is never below $N/\varphi$, so the
pattern of cell sizes does not repeat even once inside a design.

The golden ratio is not the only possible cut. The silver ratio
$1 + \sqrt2$, the plastic ratio $\rho \approx 1.325$, the balanced rule
$p = 1/2$ of \citet{wessing2017} and a control drawing the larger
fraction in $U(0.5, 0.7)$ at each cut were tested on 120 shared seeds,
with the greedy matching of Sect.~\ref{sec:random}. At $N = 1024$ the
five rules are within four percent of each other on both discrepancies
in $d = 2$, $5$ and $10$, and the balanced rule is the most accurate. The
balanced rule is however a regular grid at this size, and the cells of a
grid are aligned with the Latin bins. The greedy matching is then as good
as the exact one of Proposition~\ref{prop:unbiased}, within $2\%$,
whereas it is $3$ to $18\%$ worse on the golden cells. Given the exact
matching, the golden cells become the more accurate at $N = 1000$ and
$1024$, by $10$ to $13\%$ in $d = 2$, $7$ to $9\%$ in $d = 5$ and $1.5$
to $3\%$ in $d = 10$. The design of \citet{wessing2017} itself, with its
bipartite matching, is within $4\%$ of LAT on both discrepancies, and its
small advantage at $d = 5$ also comes from its matching. The cut rule is
thus of second order compared to the Latin margins. The golden section is
retained for the exact integer description of its recursion, $\varphi$
tying the split sizes to the Fibonacci numbers. It also avoids the grid
better than the balanced rule: the scan of Sect.~\ref{sec:golden} up to
$N = 4096$ in $d = 2$ returns ten degenerate sizes for the golden cut and
seventeen for the balanced one, among them all the powers of two.

\section{Randomization}
\label{sec:random}

The randomization of LAT takes place at three levels. First, the tiling
itself is drawn anew for each realization. At each cut the two children
are exchanged with a probability of one half, and ties for the longest
edge are broken at random. Each cell keeps a volume of $1/N$ and only the
layout changes. Second, the point of each cell is placed uniformly at
random inside it. This one-point-per-cell rule is the stratified
estimator of \citet{haber1966}. Drawing the tiling as well makes the
construction a randomized quasi-Monte Carlo rule in the tradition of
\citet{cranley1976} and \citet{owen1995}. At this level the estimate is
unbiased for any integrable function, and independent randomizations
estimate the error. Third, the margins are made Latin. In each coordinate
the $N$ cells are matched one-to-one with the $N$ bins $[k/N, (k+1)/N)$
they overlap, and the point of a cell is drawn uniformly in the
intersection of its cell with its bin. Each bin then holds exactly one
point, as in a Latin hypercube \citep{mckay1979}, and each point stays
inside its cell. A perfect matching always exists. Indeed, $k$ cells have
a union of volume $k/N$, whose projection on the coordinate has a length
of at least $k/N$ and meets at least $k$ bins. This is Hall's condition.
A first matching is built by sweeping the bins in order and giving each
one to the overlapping cell whose interval ends first, a greedy rule that
is optimal for intervals \citep{glover1967}. This matching is
structured, so it is randomized by swaps that keep it perfect: two cells
exchange their bins when each bin lies in both cells. The whole costs
$O(N \log N)$ per coordinate. A coordinate-wise rank transform, the
recipe of \citet{saka2007} and \citet{shields2016}, also gives Latin
margins but lets a point leave its cell. It is within $1.5\%$ of the
matching on the discrepancies at $N = 1024$, so keeping the point inside
its cell costs nothing. On the canonical tiling at $d = 2$ and
$N = 1024$, jittering the points divides the centred discrepancy of the
centroids by $2.7$, and the Latinization divides it by a further $2.5$.

Whether the third level preserves the unbiasedness depends on the law of
the matching. The point of a cell is uniform in the intersection of its
cell with its bin and not in the whole cell. Write $C_i$ for the
projection of cell $i$ on the coordinate at hand, $w_i$ for its length,
$B_j = [j/N, (j+1)/N)$ for the bins and
\begin{equation}
q_{ij} = |C_i \cap B_j| / w_i
\label{eq:claim}
\end{equation}
for the fraction of the cell covered by a bin. These are the frequencies
a uniform point in the cell would give.
\begin{proposition}
\label{prop:unbiased}
Let $[0,1]^d$ be partitioned into $N$ axis-aligned boxes of volume $1/N$.
In each coordinate the matrix $q$ of \eqref{eq:claim} is doubly
stochastic. There are therefore perfect matchings $\sigma_1, \ldots,
\sigma_m$ of the cells with the bins and weights $\lambda_t \ge 0$ of sum
one such that $q = \sum_t \lambda_t P_{\sigma_t}$, with $P_\sigma$ the
permutation matrix of $\sigma$. Draw the matching as $\sigma_t$ with
probability $\lambda_t$, independently in each coordinate, and the point
of a cell uniformly in the intersection of its cell with its bins. That
point is then uniform in its cell. The margins are exactly Latin, every
point lies inside its cell, and the sample mean is unbiased for any
integrable function.
\end{proposition}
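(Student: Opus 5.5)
The plan is to prove the claims in the order they are stated, working one coordinate at a time. The first step is double stochasticity of $q$. Row sums: the bins $B_j$ partition $[0,1)$, so $\sum_j |C_i\cap B_j| = |C_i| = w_i$, hence $\sum_j q_{ij}=1$. Column sums need the volume condition. Write cell $i$ as $C_i\times R_i$, with $R_i$ the box of the other $d-1$ coordinates. Then $\mathrm{vol}(\text{cell } i)=w_i\,|R_i|=1/N$, so $1/w_i = N|R_i|$. Hence
\[
\sum_i q_{ij}=N\sum_i |C_i\cap B_j|\,|R_i| = N\,\mathrm{vol}\Bigl(\bigcup_i \text{cell}_i \cap (B_j\times[0,1]^{d-1})\Bigr)=N\cdot\tfrac1N=1,
\]
because the cells partition the cube (boundaries have measure zero). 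The Birkhoff--von Neumann theorem then gives the decomposition $q=\sum_t\lambda_t P_{\sigma_t}$. Every $\sigma_t$ with $\lambda_t>0$ satisfies $q_{i\sigma_t(i)}>0$, so each cell is matched only to bins it meets with positive length. The intersection $C_i\cap B_{\sigma_t(i)}$ is therefore a nondegenerate interval, and the conditional uniform draw is well defined.

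The second step is that the point is uniform in its cell. I will take the point of cell $i$ to have coordinates drawn independently across coordinates: in coordinate $\ell$ it is uniform on $C_i^{(\ell)}\cap B^{(\ell)}_{\sigma^{(\ell)}(i)}$. The matchings are independent across coordinates, so it suffices to show that each coordinate marginal is uniform on $C_i^{(\ell)}$; the product of these marginals is then uniform on the box. For a Borel set $A\subset C_i$,
\[
P(x\in A)=\sum_j P(\sigma(i)=j)\,\frac{|A\cap B_j|}{|C_i\cap B_j|}=\sum_j q_{ij}\frac{|A\cap B_j|}{|C_i\cap B_j|}=\sum_j\frac{|A\cap B_j|}{w_i}=\frac{|A|}{w_i},
\]
using $P(\sigma(i)=j)=\sum_{t:\sigma_t(i)=j}\lambda_t=q_{ij}$. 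Terms with $q_{ij}=0$ vanish consistently.

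The third step collects the remaining claims. The margins are Latin because each $\sigma$ is a bijection and each point lies in its assigned bin. Each point lies in its cell by construction. Unbiasedness follows because the point of cell $i$ is uniform on a set of volume $1/N$, so
\[
E\Bigl[\tfrac1N\sum_i f(x_i)\Bigr]=\sum_i\int_{\text{cell}_i}f=\int_{[0,1]^d}f
\]
for any integrable $f$, by linearity of expectation alone. No independence between cells is needed, which matters because the matching couples the cells.

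The main obstacle is the column-sum step. It is the only place where equal volume and the partition property enter, and the factor $1/w_i$ must be converted into the transverse measure $N|R_i|$ without error. A secondary care point is the precise meaning of the independence in the statement. The proof needs the coordinates to be drawn independently given the matchings, with the matchings themselves independent across coordinates. With that reading, the per-coordinate marginal computation lifts to uniformity on the box.
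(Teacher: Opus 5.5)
Your proof is correct and follows the paper's argument. The steps are the same: row sums because the bins cover $C_i$, column sums by writing $1/w_i = N|R_i|$ and summing slab volumes over the partition, Birkhoff--von Neumann, and the marginal density $q_{ij}/|C_i\cap B_j| = 1/w_i$ lifted to the box by independence across coordinates. Two of your remarks are small rigor points the paper leaves implicit: that $q_{i\sigma_t(i)}\ge\lambda_t>0$, so the $\sigma_t$ are genuine perfect matchings of the overlap graph and the conditional draw is well defined, and that unbiasedness needs only linearity and no independence between cells.
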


\begin{proof}
The rows of $q$ sum to one, as the bins cover $C_i$. For the columns,
cell $i$ has volume $1/N = w_i A_i$ with $A_i$ the area of its face
orthogonal to the coordinate, so $1/w_i = N A_i$ and
$\sum_i q_{ij} = N \sum_i A_i |C_i \cap B_j|$. Each term is the volume of
the part of cell $i$ lying in the slab of the bin, and the cells
partition the cube, so the sum is $N$ times the volume of that slab,
which is one. A doubly stochastic matrix is a convex combination of
permutation matrices \citep{birkhoff1946}. This gives the $\lambda_t$.
Under this law cell $i$ holds bin $j$ with probability $q_{ij}$ and its
point is uniform on $C_i \cap B_j$, so the density of the coordinate on
$C_i$ is $q_{ij} / |C_i \cap B_j| = 1/w_i$. The coordinates are matched
independently, hence the point is uniform in the cell. Every cell has
volume $1/N$ and holds one uniform point, so the expectation of the
sample mean is the sum of the integrals of $f$ over the cells.
\end{proof}

The matching of Proposition~\ref{prop:unbiased} is drawn without listing
the $\sigma_t$, by dependent rounding \citep{gandhi2006}. A cycle of the
graph of the fractional entries is walked and the entries are moved along
it alternately, by an amplitude drawn between the two values that keep
them in $[0, 1]$. Each step leaves the row and column sums and every
expectation unchanged, and makes one more entry integral. The number of
steps is the number of cell-to-bin incidences. A cell of volume $1/N$
with a bounded aspect ratio meets $O(N^{1-1/d})$ bins in each
coordinate, so this rule costs $O(N^{2-1/d})$ per coordinate against the
$O(N \log N)$ of the greedy one. The studies below therefore use the
greedy rule, and the price of that choice is measured. In the smallest
case, $d = 2$ and $N = 3$, the point of a cell lands in the central cell
of the $3\times3$ bin grid with probability $0.286$ instead of $1/3$
under the greedy rule, and with probability $0.332$ under the rule of the
proposition, over $2\times10^5$ randomizations. On the type A and C
integrands, over $2\times10^4$ randomizations per case with
$d \in \{2, 5\}$ and $N \in \{64, 256\}$, the greedy rule departs from
zero by more than two standard errors in one case out of eight, with
$t = 3.7$, and its bias is at most $2.6\%$ of its RMSE. The rule of the
proposition stays below $1.6$ standard errors and $1.2\%$. Over 16 shared
tilings the exact rule is also better in terms of centred discrepancy,
by $1$ to $16\%$ at $N = 256$ and $1024$ in $d = 2$, $5$ and $10$. The
greedy matching is thus slightly biased and slightly less well spread.
The refinement estimator of Sect.~\ref{sec:refine} uses the jittered rule
without Latinization, so that its unbiasedness is the proved case.

\section{Space-filling quality}
\label{sec:discrepancy}

Let's first take a look at Fig.~\ref{fig:convergence}. It compares LAT
with crude Monte Carlo (MC), LHS, and the scrambled Halton
\citep{owen2017halton} and Sobol' sequences over 16 randomizations. The
Sobol' points are scrambled by the linear matrix scrambling of
\citet{matousek1998} followed by a digital shift, as implemented in
\texttt{scipy.stats.qmc}. The four discrepancies give the same order up
to $d = 10$: Sobol', then Halton, LAT, LHS and finally MC. The WD and MD
are not shown as they give the same picture as the CD. LAT is better than
MC in all dimensions, as any equal-volume stratification is in terms of
expected squared discrepancy \citep{kiderlen2022}. In low dimension it
converges faster than LHS, with a slope of $-0.82$ compared to $-0.54$ on
the CD at $d = 2$. This is due to the fact that LHS only improves the
margins and keeps the Monte Carlo rate on the joint distribution. LAT
remains behind the two sequences, by a factor of up to five for
$d \le 10$, for the reason given in Sect.~\ref{sec:rate}. As the
dimension grows, all these differences are compressed. At $d = 30$ all
the methods converge at the Monte Carlo rate, and the structured designs
only keep a small advantage on the constant. The ASD does not show this
advantage. At $d = 30$ its square is dominated by the diagonal terms of
the double sum in \eqref{eq:asd}. These terms give $2^{-d}/N$ whatever the design, and the five designs agree within $0.05\%$ at every $N$.
\citet{clement2025} note that a high dimension calls for weights on the
coordinates, and this is the reason.

\begin{figure}[tbp]
\centering
\includegraphics[width=0.74\textwidth]{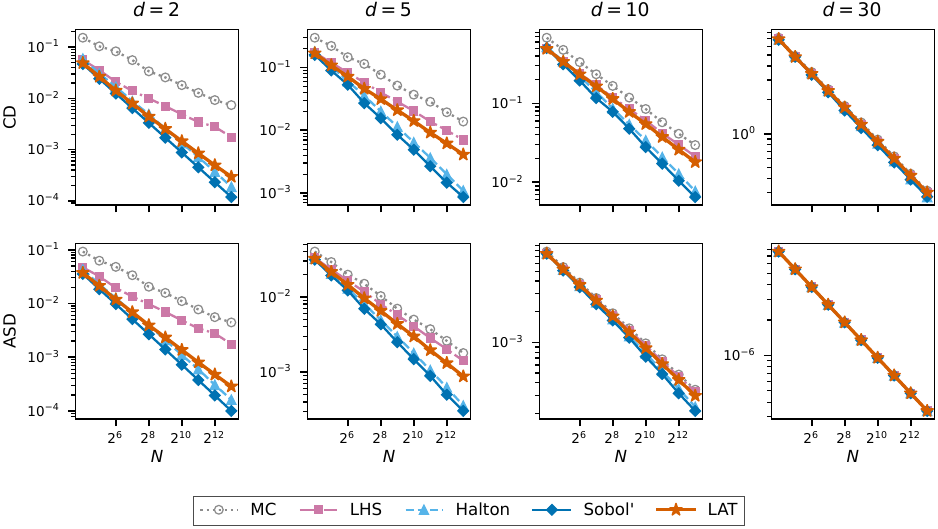}
\caption{Convergence of the centred discrepancy (top) and of the average
squared discrepancy (bottom) with respect to the sample size in
$d = 2, 5, 10, 30$, 16 randomizations on the same realizations.}
\label{fig:convergence}
\end{figure}

Plain LHS is the weakest member of its family. Its optimized variants are
also considered: the CD-optimized hypercube of \texttt{scipy.stats.qmc}
and the maximin hypercube of \citet{morris1995}. The latter minimizes
$\varphi_p = (\sum_{i<k} \lVert x_i - x_k \rVert_2^{-p})^{1/p}$ with
$p = 50$. As expected, the CD-optimized hypercube is the best of the
Latin family. At $N = 1024$ it is better than LAT in all dimensions, by
a factor between $1.1$ and $1.7$. At $N = 8192$ these factors shrink, and
the order is even reversed at $d = 2$ since the slope of LAT is steeper.
The optimization thus improves the constant and not the rate. It also
has a cost. Every candidate swap costs $O(dN)$ and the number of swaps
$T$ is a search budget, so an optimized design costs $O(TdN)$ where LAT
is built in one pass. If a design is built once and each simulation is
expensive, this cost does not matter and the optimized design is the
right choice. If the design has to be rebuilt many times, in a
cross-validation loop, a replication study or an adaptive scheme, a
one-shot construction scales better.

\section{Integration}
\label{sec:integration}

\subsection{Test functions and protocol}
\label{sec:funcs}

The core of the study is the type A, B and C benchmark of
\citet{kucherenko2015}, with the same protocol as in \citet{roy2020}:
\begin{equation}
f_{\mathrm{A}}(x) = \prod_{i=1}^{d}
   \frac{\lvert 4x_i - 2\rvert + a_i}{1 + a_i}, \quad a_i = i;
\qquad
f_{\mathrm{B}}(x) = \prod_{i=1}^{d} \frac{d - x_i}{d - 1/2};
\qquad
f_{\mathrm{C}}(x) = 2^{d} \prod_{i=1}^{d} x_i.
\label{eq:abc}
\end{equation}
All three integrate to one. Type~A has a few important variables since
the weight of $x_i$ decreases with $a_i$. Type~B has all its variables
equally important with weak interactions. Type~C has all its variables
important with strong interactions. A geometric integrand is added,
\begin{equation}
f_{\mathrm{G}}(x) = \mathbf{1}\{\langle x, u\rangle \le t\},
\qquad t = \tfrac12 \textstyle\sum_i u_i,
\label{eq:geom}
\end{equation}
where $u$ is a unit vector with positive components, drawn anew for each
randomization by normalizing $|z| + \tfrac14$ with $z$ standard normal.
Its exact value is $\tfrac12$ whatever $u$ and $d$, and it is
discontinuous across a plane that is not aligned with the axes. Four
synthetic examples of \citet{owen2022drop} complete the test,
\begin{equation}
f_1 = \prod_{i=1}^{2} \Bigl(x_i - \tfrac12\Bigr), \quad
f_2 = \sum_{i=1}^{5} \bigl(e^{x_i} + 1 - e\bigr), \quad
f_4 = \prod_{i=1}^{3} \bigl(e^{x_i} + 1 - e\bigr), \quad
f_6 = \sum_{i=1}^{d} x_i,
\label{eq:morefunc}
\end{equation}
with exact integrals $0$, $0$, $0$ and $d/2$. The pure interaction $f_1$
has no main effect, $f_2$ and $f_6$ are additive, and $f_6$ is run for
$d = 5$ and $30$. Finally, Sect.~\ref{sec:emulator} uses five
engineering emulators: the borehole, OTL circuit, piston, wing weight and
robot arm functions, with inputs uniformly distributed over the standard
ranges of the virtual library of \citet{surjanovic2013}.

Comparisons of this kind are classically run for $N = 2^k$, here from
$2^6$ to $2^{13}$. These are the only sizes recommended by
\citet{owen2022drop} for Sobol' points. However, the number of
simulations one can afford is rarely a power of two, $1000$ runs for
instance. The study is thus also carried out on a dense grid of $58$
sizes, eight per octave with the powers of two and $N = 1000$ included,
on type A at $d = 2$ and $5$, type B at $d = 5$, $f_2$ and $f_4$. Every
reported error is a RMSE over $99$ independent randomizations, each LAT
realization drawing a new tiling.

One caveat applies at $d = 2$. The scrambling of \citet{matousek1998}
integrates exactly a product of coordinate functions that are linear on
each dyadic half, and $86$ of the $99$ randomizations of Sobol' return
the exact value of the type A integrand at $N = 8192$. The RMSE then
comes from a few randomizations. Each case is thus screened by its
effective sample size, $\mathrm{ESS} = (\sum_r e_r^2)^2 / \sum_r e_r^4$
with $e_r$ the error of randomization $r$, and a value is marked with a
dagger when $\mathrm{ESS} < \max(10, R/20)$. The type A, B and C cells of
Sobol' at $d = 2$ are recomputed with $4000$ randomizations. With $R$
randomizations the relative standard error of a RMSE is
$\tfrac12\sqrt{1/\mathrm{ESS} - 1/R}$. Two methods are said to be
separated when their errors differ by more than twice the combined
standard error, and the leading group of a comparison is the most
accurate method together with those not separated from it.

\subsection{Results at the powers of two}
\label{sec:results}

Let's first take a look at Fig.~\ref{fig:integration}. The results are
in accordance with the ones obtained on the discrepancy, and two factors
govern them: the class of the integrand and the dimension. On an
additive integrand only the margins matter. LAT and LHS then have the
same error on $f_6$, and both converge close to $N^{-3/2}$.
\citet{stein1987} proved that a Latin hypercube removes the additive part
of the variance. The margins of LAT are Latin although their law is not
exactly the one of a Latin hypercube, and the measured errors of the two
designs are the same on every additive case here. Scrambled Halton has no
Latin structure and is worse by a factor $210$ at $d = 5$ and $420$ at
$d = 30$ on $f_6$. As soon as the integrand has interactions, LAT is
better than LHS. The gain grows with the order of the interactions and
reaches a factor sixty on $f_1$, a function without main effect. On smooth
products the scrambled nets are far better than the stratified designs,
since they reach a variance of order $N^{-3}(\log N)^{d-1}$
\citep{owen1997} whereas one uniform point per cell of an equal-volume
partition gives $O(N^{-1-2/d})$. On the discontinuous $f_{\mathrm{G}}$
this advantage disappears. The nets fall back to the stratification rate
\citep{hewang2015}, and the gap with LAT is below a factor $1.35$.

\begin{figure}[tbp]
\centering
\includegraphics[width=0.70\textwidth]{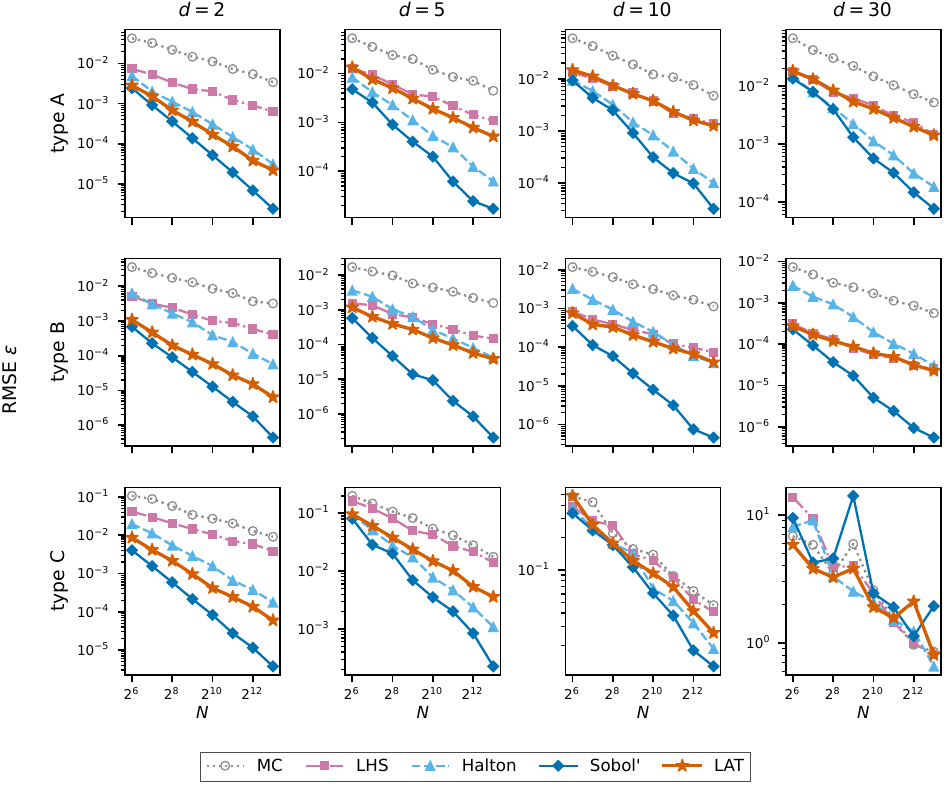}
\caption{Integration error with respect to the sample size on the
type~A, B and C benchmark, $d = 2, 5, 10, 30$, 99 randomizations
($2^{14}$ for type~C at $d = 30$, $4000$ for the type A, B and C cells of
Sobol' at $d = 2$). Panels are scaled independently.}
\label{fig:integration}
\end{figure}

The dimension compresses all the differences, since the stratification
rate $N^{-1/2-1/d}$ of a smooth integrand tends to $N^{-1/2}$. At
$d = 30$ LAT is still better than MC. It is level with LHS on type~A and
better than LHS by $9\%$ on type~B, and its slope is back to the Monte
Carlo value. Type~C at $d = 30$ has a variance of $(4/3)^{d} - 1$ with a
heavy right tail, and no method resolves it at this budget.

\subsection{Arbitrary sample sizes}
\label{sec:anyn}

Moving on to the sample sizes between the powers of two,
Fig.~\ref{fig:anyn} presents the dense sweep. Using Sobol' at another
size means taking the first $N$ points of the scrambled sequence, and
\texttt{scipy.stats.qmc} warns against it \citep{roy2023}. The saw-tooth
pattern is the one of this truncated net. Its error is the largest just
after each power of two and then decreases until the next one
\citep{owen2022drop}. LAT follows a power law across the same sizes: the
median variation inside an octave matches the decay per octave within
fifteen percent in each case. It keeps the rates obtained at the powers
of two, $-1.02$ on type A at $d = 2$ and $-1.5$ on $f_2$. On these two
cases LAT is close to the net at the powers of two, and the truncated net
is worse than LAT at each other size, with one exception on type A. The
median factor is $2.7$ and $16$ resp.

\begin{figure}[tbp]
\centering
\includegraphics[width=0.78\textwidth]{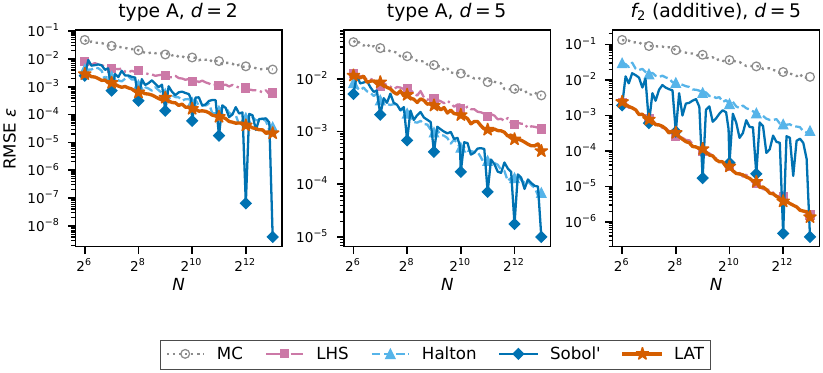}
\caption{Integration error on a dense grid of sample sizes, eight per
octave with the powers of two (markers) and $N = 1000$ included, 99
randomizations. The saw-tooth pattern is Sobol' truncated to the first
$N$ points of the scrambled sequence. Slopes over all $58$ sizes are in
Table~\ref{tab:anyn}. Panels are scaled independently.}
\label{fig:anyn}
\end{figure}

\begin{table}[tbp]
\centering
\scriptsize
\setlength{\tabcolsep}{1.5pt}
\begin{tabular}{lcccccc}
\toprule
 & MC & LHS & Halton & Sobol' & LAT & Sobol' $1024$ \\
\midrule
type A, $d = 2$ & \cellcolor{cmpworse!40}$1.1\!\times\!10^{-2}$ ($-0.49$) & \cellcolor{cmpworse!21}$1.6\!\times\!10^{-3}$ ($-0.51$) & \cellcolor{cmpworse!6}$2.9\!\times\!10^{-4}$ ($-1.01$) & \cellcolor{cmpworse!7}$3.6\!\times\!10^{-4}$ ($-1.33^{\dagger}$) & $1.7\!\times\!10^{-4}$ ($-1.01$) & $6.0\!\times\!10^{-5}$$^{\dagger}$ \\
type A, $d = 5$ & \cellcolor{cmpworse!40}$1.3\!\times\!10^{-2}$ ($-0.50$) & \cellcolor{cmpworse!11}$3.3\!\times\!10^{-3}$ ($-0.50$) & \cellcolor{cmpbetter!40}$5.0\!\times\!10^{-4}$ ($-0.98$) & \cellcolor{cmpbetter!38}$5.4\!\times\!10^{-4}$ ($-1.06^{\dagger}$) & $2.0\!\times\!10^{-3}$ ($-0.67$) & $1.7\!\times\!10^{-4}$ \\
type B, $d = 5$ & \cellcolor{cmpworse!40}$4.7\!\times\!10^{-3}$ ($-0.51$) & \cellcolor{cmpworse!10}$3.6\!\times\!10^{-4}$ ($-0.50$) & \cellcolor{cmpworse!7}$2.7\!\times\!10^{-4}$ ($-0.94$) & \cellcolor{cmpbetter!40}$7.4\!\times\!10^{-5}$ ($-1.12^{\dagger}$) & $1.5\!\times\!10^{-4}$ ($-0.69$) & $9.4\!\times\!10^{-6}$$^{\dagger}$ \\
$f_2$, $d = 5$ & \cellcolor{cmpworse!40}$3.4\!\times\!10^{-2}$ ($-0.51$) & \cellcolor{cmpbetter!40}$3.7\!\times\!10^{-5}$ ($-1.50$) & \cellcolor{cmpworse!24}$2.2\!\times\!10^{-3}$ ($-0.94$) & \cellcolor{cmpworse!14}$4.1\!\times\!10^{-4}$ ($-1.20^{\dagger}$) & $4.0\!\times\!10^{-5}$ ($-1.52$) & $4.6\!\times\!10^{-5}$$^{\dagger}$ \\
$f_4$, $d = 3$ & \cellcolor{cmpworse!40}$4.0\!\times\!10^{-3}$ ($-0.50$) & \cellcolor{cmpworse!39}$3.9\!\times\!10^{-3}$ ($-0.48$) & \cellcolor{cmpbetter!6}$7.2\!\times\!10^{-4}$ ($-0.99$) & \cellcolor{cmpbetter!40}$5.4\!\times\!10^{-4}$ ($-1.04^{\dagger}$) & $7.5\!\times\!10^{-4}$ ($-0.83$) & $1.6\!\times\!10^{-4}$ \\
\bottomrule

\end{tabular}
\caption{Integration error at $N = 1000$ with the slope fitted over all
$58$ sizes of the sweep, 99 randomizations. Sobol' at $N = 1000$ takes
the first $1000$ points of the scrambled sequence, and the last column
gives the same sequence at its own size $N = 1024$. A daggered slope
relies on sizes with an effective sample size below $10$.}
\label{tab:anyn}
\end{table}

Table~\ref{tab:anyn} confirms this at $N = 1000$. Going from $1024$ to
$1000$ points multiplies the error of the net by a factor of three. The
leading group at $N = 1000$ is LAT alone on type A at $d = 2$, and LAT
together with LHS on $f_2$. It is the truncated net alone on type B and
on $f_4$, and the truncated net together with scrambled Halton on type A
at $d = 5$. The sequences thus do not fail away from the powers of two,
but their best accuracy requires them. LAT itself is not affected by the
sample size.

Three other designs are defined for any $N$. The stratification of
\citet{he2016} cuts the unit interval into $N$ equal segments, draws one
point in each and maps them through a $d$-dimensional Hilbert curve. Its
cells have a volume of $1/N$ but its margins are not Latin, and it is
worse than LAT by a factor $3$ to $9$ on types A and B. After a
coordinate-wise rank transform, it is within $2.5\%$ of LAT on every
discrepancy and within $20\%$ on the integration errors. Hence, the
accuracy of LAT comes from the combination of an equal-volume
stratification with exact margins, and aperiodicity is not by itself a
source of accuracy. The sequence $R_d$ of \citet{roberts2018} and a
randomly shifted rank-1 lattice \citep{sloanjoe1994,dickkuosloan2013}
have equidistributed margins at any $N$. They are far better than LAT on
the periodic type A family. On smooth non-periodic integrands an
equidistributed grid keeps the boundary term of the rectangle rule, and
the lattice is $30$ times worse than LAT on $f_2$ at $N = 1000$. A lattice
rule is thus the right choice when the integrand is periodic or nearly
so, and LAT when it is not.

\subsection{Engineering cases}
\label{sec:emulator}

The comparison is now repeated on the five emulators, following the
numerical example of \citet{owen2026}. Table~\ref{tab:emulator} reports
the variance reduction with respect to Monte Carlo at equal cost, i.e.,
the Monte Carlo variance $s^2/(NR)$ divided by the variance $S^2/R$ of the
method over $R = 200$ randomized estimates, with $s^2$ estimated from
$2\times10^5$ Monte Carlo points. Its relative standard error is
$\sqrt{1/\mathrm{ESS} - 1/R}$, about $12\%$ here, so that two ratios are
resolved when they differ by a factor $1.4$. These emulators are smooth
functions of low effective dimension, and the hierarchy is the one
expected for this class. The scrambled net is alone in the leading group
of each of them, by several orders of magnitude. Its effective sample
size however falls to $4.8$ at the dyadic sizes, and these cells are
daggered. LAT reduces the Monte Carlo variance by a factor ranging from
three to a thousand, and it is better than LHS on the five functions at
$N = 2^{13}$. The exception is the robot arm. It depends jointly on all its inputs and all the methods gain little on it
\citep{caflisch1997}. Truncating the net to $N = 1000$ costs it one to
two orders of magnitude on the four functions where it gains anything,
and it remains the best method on four of the five functions. The ratios
of LAT differ by at most $16\%$ between $1024$ and $1000$, within the
resolution of the experiment.

\begin{table}[tbp]
\centering
\scriptsize
\setlength{\tabcolsep}{2.6pt}
\begin{tabular}{lcccccccccc}
\toprule
 & & \multicolumn{4}{c}{$N = 2^{13}$} & \multicolumn{4}{c}{$N = 1000$}
 & $N{=}1024$ \\
\cmidrule(lr){3-6}\cmidrule(lr){7-10}\cmidrule(lr){11-11}
function & $d$ & LHS & Halton & Sobol' & LAT & LHS & Halton & Sobol' & LAT
 & Sobol' \\
\midrule
borehole & 8 & \cellcolor{cmpworse!40}$21.7$ & \cellcolor{cmpbetter!17}$3.3\!\times\!10^{3}$ & \cellcolor{cmpbetter!40}$2.9\!\times\!10^{5}{}^{\dagger}$ & $1.0\!\times\!10^{2}$ & \cellcolor{cmpworse!22}$30.9$ & \cellcolor{cmpbetter!7}$2.8\!\times\!10^{2}$ & \cellcolor{cmpbetter!15}$1.3\!\times\!10^{3}$ & $72.8$ & $4.1\!\times\!10^{4}$ \\
OTL circuit & 6 & \cellcolor{cmpworse!40}$1.3\!\times\!10^{2}$ & \cellcolor{cmpbetter!6}$2.5\!\times\!10^{3}$ & \cellcolor{cmpbetter!40}$4.6\!\times\!10^{7}{}^{\dagger}$ & $1.1\!\times\!10^{3}$ & \cellcolor{cmpworse!19}$1.6\!\times\!10^{2}$ & \cellcolor{cmpworse!6}$3.2\!\times\!10^{2}$ & \cellcolor{cmpbetter!8}$3.7\!\times\!10^{3}$ & $4.4\!\times\!10^{2}$ & $8.2\!\times\!10^{5}{}^{\dagger}$ \\
piston & 7 & \cellcolor{cmpworse!40}$15.1$ & \cellcolor{cmpbetter!18}$1.3\!\times\!10^{3}$ & \cellcolor{cmpbetter!40}$4.5\!\times\!10^{4}$ & $69.2$ & \cellcolor{cmpworse!18}$20.0$ & \cellcolor{cmpbetter!11}$2.5\!\times\!10^{2}$ & \cellcolor{cmpbetter!18}$7.1\!\times\!10^{2}$ & $39.4$ & $6.4\!\times\!10^{3}$ \\
wing weight & 10 & \cellcolor{cmpworse!40}$85.8$ & \cellcolor{cmpbetter!6}$8.9\!\times\!10^{2}$ & \cellcolor{cmpbetter!40}$1.1\!\times\!10^{6}{}^{\dagger}$ & $2.3\!\times\!10^{2}$ & \cellcolor{cmpworse!26}$84.1$ & \cellcolor{cmpworse!11}$1.2\!\times\!10^{2}$ & \cellcolor{cmpbetter!13}$2.2\!\times\!10^{3}$ & $1.6\!\times\!10^{2}$ & $5.8\!\times\!10^{4}$ \\
robot arm & 8 & \cellcolor{cmpworse!40}$1.6$ & \cellcolor{cmpbetter!31}$20.8$ & \cellcolor{cmpbetter!40}$37.2$ & $3.0$ & \cellcolor{cmpworse!8}$1.9$ & \cellcolor{cmpbetter!19}$7.0$ & \cellcolor{cmpbetter!7}$3.3$ & $2.1$ & $2.5$ \\
\bottomrule

\end{tabular}
\caption{Variance-reduction ratio over Monte Carlo at equal cost,
$R = 200$ randomizations, at $N = 2^{13}$ and at $N = 1000$. The last
column gives the same sequence at its own size $N = 1024$. A ratio of ten
means one tenth of the Monte Carlo variance. A dagger marks a cell where
the $R$ estimates are dominated by a few of them and is known to an order
of magnitude. The others are known to between $8$ and $20\%$.}
\label{tab:emulator}
\end{table}

\subsection{Analysis of the convergence rate}
\label{sec:rate}

Before Latinization, LAT is a stratified design and converges at the
corresponding rate. The argument is a classical count of the cells on the
boundary \citep{pausinger2016,he2016}. The hypothesis on the diameter is
provided by the bound on the aspect ratio of Sect.~\ref{sec:golden}.

\begin{proposition}
\label{prop:rate}
Partition $[0,1]^d$ into $N$ cells of volume $1/N$ and diameter at most
$c\,N^{-1/d}$, and place one point uniformly at random in each. For an
axis-aligned box $B$, the error between the fraction of points in $B$ and
its volume has mean zero and standard deviation at most of order
$N^{-(d+1)/(2d)}$.
\end{proposition}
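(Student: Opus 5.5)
The plan is the classical boundary-cell count. Write the error as a sum of independent centred indicators, observe that only cells crossing the boundary of $B$ contribute variance, and count those cells using the diameter bound.

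Let $x_i$ be the point of cell $Q_i$, uniform in $Q_i$, with the $x_i$ independent. The fraction of points in $B$ minus its volume is
\[
E = \frac1N\sum_{i=1}^N \bigl(\mathbf{1}\{x_i\in B\} - N|Q_i\cap B|\bigr),
\]
because $\sum_i |Q_i\cap B| = |B|$ by the partition property. Each summand has mean zero, since $P(x_i\in B) = |Q_i\cap B|/|Q_i| = N|Q_i\cap B|$, so $E$ has mean zero. By independence,
\[
\operatorname{Var} E = N^{-2}\sum_i p_i(1-p_i), \qquad p_i = N|Q_i\cap B|.
\]
The terms vanish when $Q_i\subset B$ ($p_i = 1$) or $Q_i\cap B$ is null ($p_i = 0$). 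Each remaining term is at most $1/4$. Hence $\operatorname{Var} E \le K/(4N^2)$, where $K$ counts the cells meeting both $B$ and its complement, each of which meets $\partial B$.

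The key step is to bound $K$. A cell meeting $\partial B$ has diameter at most $\delta = cN^{-1/d}$, so it lies entirely in the $\delta$-neighbourhood $U_\delta$ of $\partial B$ inside $[0,1]^d$. The cells are disjoint with volume $1/N$, so $K/N \le |U_\delta\cap[0,1]^d|$.

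To bound this volume, note that $\partial B$ consists of at most $2d$ facets, each contained in a hyperplane slab intersected with the cube. The $\delta$-neighbourhood of one facet within $[0,1]^d$ lies in a slab of thickness $2\delta$, which has volume at most $2\delta$. Therefore $|U_\delta\cap[0,1]^d| \le 4d\delta$ uniformly in $B$, giving
\[
K \le 4dcN^{1-1/d}.
\]
Then $\operatorname{Var} E \le dcN^{-1-1/d}$, and the standard deviation is at most $\sqrt{dc}\,N^{-(d+1)/(2d)}$.

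The only delicate point is this uniform count. One must work with the part of the neighbourhood inside the cube, since the cells live there, and use slab thickness rather than surface area times $\delta$, so that the bound holds for arbitrarily thin or large $B$ with no dependence on its shape. For LAT, the hypothesis $\operatorname{diam} \le cN^{-1/d}$ with explicit $c$ follows from the aspect-ratio bound of three from Sect.~\ref{sec:golden} combined with volume $1/N$. The shortest edge $s$ satisfies $s^d \le 1/N$ after rescaling by the ratio bound, so the longest edge is at most $3\cdot 3^{(d-1)/d}N^{-1/d}$, roughly. This gives $c$ of order $3\sqrt d$.
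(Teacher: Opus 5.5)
Your proof is correct and follows the paper's boundary-cell count. Cells straddling $\partial B$ lie in at most $2d$ slabs of total volume $4dcN^{-1/d}$, so at most $4dcN^{1-1/d}$ of them contribute, each an independent term of order $1/N$. You simply make explicit what the paper leaves implicit: the mean-zero computation, the variance $N^{-2}\sum_i p_i(1-p_i)$ with $p_i(1-p_i)\le \tfrac14$, and the constant $\sqrt{dc}$.

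In your side remark on LAT, $s\le N^{-1/d}$ already gives a longest edge of at most $3N^{-1/d}$, and in fact $3^{(d-1)/d}N^{-1/d}$. Your extra factor is therefore unnecessary, though harmless.
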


\begin{proof}
A cell that meets the boundary of $B$ lies within $c\,N^{-1/d}$ of it, so
the cells concerned are contained in a slab of volume at most
$4dc\,N^{-1/d}$ around $\partial B$ and there are at most
$4dc\,N^{(d-1)/d}$ of them. A cell inside $B$ contributes zero and a cell
outside as well, and each boundary cell contributes an independent term
of size $1/N$, whence a variance $O(N^{(d-1)/d} N^{-2}) =
O(N^{-(d+1)/d})$.
\end{proof}

The ASD is the squared error of an anchored box, averaged over the boxes
and over the $2^d$ corners, so Proposition~\ref{prop:rate} bounds its
expectation directly. The exponent is $-0.75$ at $d = 2$, $-0.60$ at
$d = 5$, $-0.55$ at $d = 10$ and $-0.52$ at $d = 30$, and the measured
slopes of LAT on the ASD are $-0.78$, $-0.58$, $-0.52$ and $-0.50$. The
higher dimensions match the prediction closely. In $d = 2$ the measured
slope is steeper than the bound, since the Latin margins remove a part of
the error that the boundary count does not model. This rate lies between
the $-0.5$ of Monte Carlo and the almost $-1$ of a digital net. The nets
reach the faster rate because they have one point per cell at each dyadic
scale simultaneously, so the boundary errors cancel instead of averaging
out \citep{niederreiter1992}. A partition that is not a grid has no such
global alignment.

\section{Refinement, optimization and constrained regions}
\label{sec:sequential}

A design is rarely used only once for an integral. The analyst adds runs
where the response is of interest, searches for an optimum, or has to
respect a design region that is not a box. The cells of LAT support these
three uses directly. They form a partition of the cube with exact
volumes, every point lies inside its cell, and each cell can be generated
alone from its index as shown in Sect.~\ref{sec:golden}.

\subsection{Local refinement}
\label{sec:refine}

Adaptive designs place new runs where a surrogate model is uncertain
\citep{garud2017,liu2018,fuhg2021}, and refined stratified sampling
subdivides existing strata for this purpose \citep{shields2015}. With LAT
the refinement is done on the tiling itself. Each cell that meets a
region of interest is subdivided by the golden rule into $r$ sub-cells,
with one point per sub-cell, see Fig.~\ref{fig:refinement}. The
realization of the tiling is fixed by its seed and each cell has its own
random stream. Enlarging the region or the subdivision factor thus never
displaces a point placed elsewhere. The exact volumes of the cells are
quadrature weights, and the refined design still integrates without bias.
Table~\ref{tab:refine} checks both properties. The density inside the
region grows by more than an order of magnitude while the rest of the
design is preserved. Contrary to a dyadic refinement that multiplies the
sample by $2^d$, the tiling adds points a few at a time. Applied to the
whole hypercube, this subdivision is also the only sequential mechanism
LAT offers. The tilings for $N$ and $N+1$ differ, and the matching to
$N+1$ bins moves all the points when one is added.

\begin{figure}[tbp]
\centering
\includegraphics[width=0.82\textwidth]{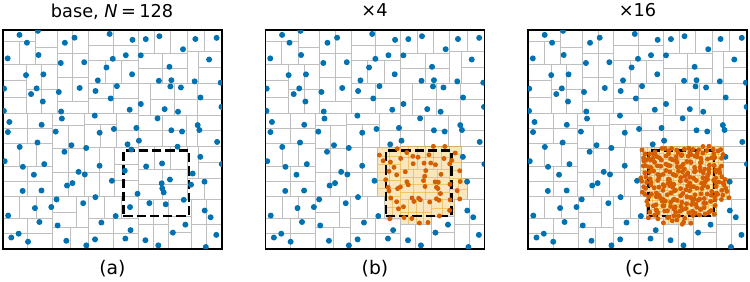}
\caption{Local refinement in $d = 2$. Cells meeting the target region
(dashed) are subdivided by the golden rule, by a factor four in (b) and
sixteen in (c). Points in cells that the region does not meet are
unchanged, bit for bit, across the three panels.}
\label{fig:refinement}
\end{figure}

\begin{table}[tbp]
\centering
\small
\begin{tabular}{lcccc}
\toprule
factor & points in region & points total & displaced outside & weighted bias \\
\midrule
$\times1$ (base) & 1.3 & 512 & 0 & $4.06\!\times\!10^{-6}$ $\pm$ $4.08\!\times\!10^{-5}$ \\
$\times4$ & 5.2 & 819 & 0 & $7.20\!\times\!10^{-6}$ $\pm$ $4.03\!\times\!10^{-5}$ \\
$\times16$ & 20.5 & 2046 & 0 & $7.97\!\times\!10^{-6}$ $\pm$ $3.99\!\times\!10^{-5}$ \\
\bottomrule

\end{tabular}
\caption{Refinement at $d = 5$ from a base design of $512$ points, region of
side $0.3$ at the centre, 99 randomizations. The integrand is
$\prod_j x_j^2$, of known integral $3^{-d}$. The $\pm$ term is the standard
error of the bias estimate.}
\label{tab:refine}
\end{table}

\subsection{Global optimization}
\label{sec:optim}

The same subdivision turns the tiling into a global optimizer of the
DIRECT family (\emph{dividing rectangles}, \citealp{jones1993}). DIRECT
also searches by splitting boxes. The search starts with one evaluation
per cell. Then, the cells holding the lowest observed values are split
with the golden rule and one jittered point is evaluated in each child.
This is repeated until the budget is spent. The split of a cell costs
$O(d)$ and does not touch the other cells, so the search has no overhead
compared to the evaluations.

Table~\ref{tab:optim} reports this best-first search with a budget of
$1024$ evaluations on the Branin and Hartmann-6 test problems
\citep{surjanovic2013}. It is compared to uniform random search, to
one-shot Sobol', LHS and LAT designs using the whole budget, and to
DIRECT itself as implemented in \texttt{scipy.optimize}. The score is the
regret and its distribution is strongly right-skewed. Two observations
can be made. First, the gain comes from the adaptivity and not from the
starting design. The refined search solves Branin in every repetition,
whereas the one-shot designs, LAT included, fail in $96\%$ of them or
more. On Hartmann-6 it reduces the median regret by a factor eight and
the failure rate from $100\%$ to $93\%$. Second, the comparison with
DIRECT depends on the dimension. The tiling search is better on Branin.
DIRECT is far better on Hartmann-6, where a random split wastes more
evaluations as the dimension grows. The tiling search is thus not meant
to replace DIRECT, nor the model-based methods that make a better use of
each run when the evaluations dominate the cost \citep{jones1998}. Its
interest is that a single partition serves as a design, as a quadrature
rule and as a search.

\begin{table}[tbp]
\centering
\small
\begin{tabular}{lcccccc}
\toprule
 & \multicolumn{3}{c}{Branin ($d = 2$)}
 & \multicolumn{3}{c}{Hartmann-6 ($d = 6$)} \\
\cmidrule(lr){2-4}\cmidrule(lr){5-7}
method & median & $q_{95}$ & fail & median & $q_{95}$ & fail \\
\midrule
random search & $2.82\!\times\!10^{-2}$ & $1.25\!\times\!10^{-1}$ & $98\%$ & $6.27\!\times\!10^{-1}$ & $9.07\!\times\!10^{-1}$ & $100\%$ \\
LHS (one shot) & $2.63\!\times\!10^{-2}$ & $1.15\!\times\!10^{-1}$ & $98\%$ & $5.63\!\times\!10^{-1}$ & $8.67\!\times\!10^{-1}$ & $100\%$ \\
Sobol' (one shot) & $3.24\!\times\!10^{-2}$ & $1.12\!\times\!10^{-1}$ & $96\%$ & $5.51\!\times\!10^{-1}$ & $9.00\!\times\!10^{-1}$ & $100\%$ \\
LAT (one shot) & $3.51\!\times\!10^{-2}$ & $1.20\!\times\!10^{-1}$ & $99\%$ & $6.06\!\times\!10^{-1}$ & $9.45\!\times\!10^{-1}$ & $100\%$ \\
\addlinespace
LAT refine & $0$ & $2.31\!\times\!10^{-15}$ & $0\%$ & $7.94\!\times\!10^{-2}$ & $2.28\!\times\!10^{-1}$ & $93\%$ \\
DIRECT & $9.03\!\times\!10^{-8}$ & $9.03\!\times\!10^{-8}$ & $0\%$ & $3.95\!\times\!10^{-5}$ & $3.95\!\times\!10^{-5}$ & $0\%$ \\
\bottomrule

\end{tabular}
\caption{Minimization at a fixed budget, 99 repetitions. LAT refine
starts from $64$ cells and repeatedly splits the $8$ best cells into $4$
children each, one jittered point per child, while the one-shot designs
use the whole budget at once. DIRECT is deterministic, so its three
columns hold the same value, and \texttt{scipy} stops it at $1037$
evaluations on Branin and at $733$ on Hartmann-6, where its default
tolerance is met first. ``Fail'' is the fraction of
repetitions ending with a regret above $10^{-3}$.}
\label{tab:optim}
\end{table}

\subsection{Non-rectangular regions}
\label{sec:domains}

When the design region is not a box, the classical options are either to
reject the points falling outside of it, or to optimize a design directly
on the region as done by fast flexible filling \citep{lekivetz2015}. The
kernel-density design of \citet{roy2020} includes the constraint in the
sampling probability. The tiling provides a direct construction. The
hypercube is partitioned finely enough for about $N$ cell centres to lie
in the region $\Omega$, and these cells are kept with one point each, see
Fig.~\ref{fig:domains}. The point of a cell that straddles the boundary
is redrawn inside its cell until it falls in $\Omega$, up to eight times,
and it is set to the centre otherwise. The number of kept cells estimates
the volume of $\Omega$. The estimator of a mean is biased, since the kept
cells enter with equal weights although their intersections with $\Omega$
differ. At $d = 5$ on the ball, around one fifth of the kept points need
at least one redraw.

Table~\ref{tab:domains} compares the integration of the mean value over a
disk and a ball with the rejection of a scrambled Sobol' or Monte Carlo
stream down to the same number of accepted points. The result depends on
the dimension through the thickness of the boundary layer. In two
dimensions the filtering is at least as good as rejection. In five
dimensions it is several times worse. A large fraction of a
high-dimensional ball lies within one cell of the boundary, where the
equal-volume covering is coarse. As it stands, the construction is thus
an alternative to rejection in low dimension only. Weighting the boundary
cells by their intersection with $\Omega$ would remove the bias and most
of the difference at $d = 5$. Exact recursive splits of non-rectangular
sets exist for special sample sizes. The scrambled geometric nets of
\citet{basuowen2017} split triangles and their products with proven
rates. The filtering of the cells works for arbitrary domains and any
sample size.

\begin{figure}[tbp]
\centering
\includegraphics[width=0.82\textwidth]{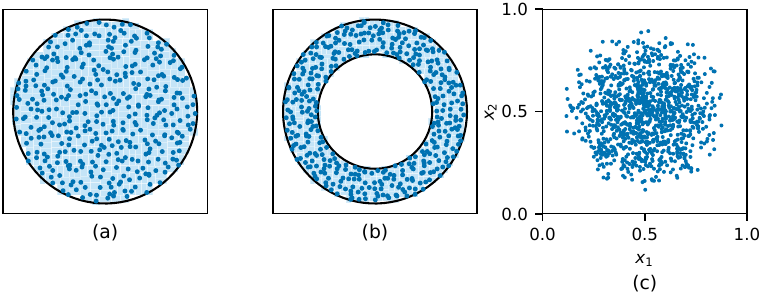}
\caption{Sampling non-rectangular regions by keeping the cells with
their centre inside. (a) Disk and (b) annulus in $d = 2$ with the kept
cells shown. (c) The $(x_1, x_2)$ projection of a design on a ball in $d = 5$. All the
points lie inside.}
\label{fig:domains}
\end{figure}

\begin{table}[tbp]
\centering
\small
\begin{tabular}{lcccc}
\toprule
 & \multicolumn{2}{c}{$d = 2$ (disk)} & \multicolumn{2}{c}{$d = 5$ (ball)} \\
\cmidrule(lr){2-3}\cmidrule(lr){4-5}
method & error & shell dev. & error & shell dev. \\
\midrule
LAT (tile filtering) & $3.45\!\times\!10^{-4}$ & 0.064 & $3.83\!\times\!10^{-3}$ & 0.298 \\
Sobol' (rejection) & \cellcolor{cmpworse!6}$4.26\!\times\!10^{-4}$ & \cellcolor{cmpworse!6}0.066 & \cellcolor{cmpbetter!40}$7.26\!\times\!10^{-4}$ & \cellcolor{cmpbetter!40}0.134 \\
MC (rejection) & \cellcolor{cmpworse!40}$1.85\!\times\!10^{-3}$ & \cellcolor{cmpworse!40}0.146 & \cellcolor{cmpbetter!30}$1.10\!\times\!10^{-3}$ & \cellcolor{cmpbetter!36}0.146 \\
\bottomrule

\end{tabular}
\caption{Mean-value integration of $\lVert x - c\rVert^2$ over the ball,
about $1024$ accepted points, 99 randomizations. Shell deviation is the
largest relative count deviation over eight equal-volume radial shells.}
\label{tab:domains}
\end{table}

\section{Conclusions and perspectives}
\label{sec:conclusion}

This work proposes a new method to build space-filling designs for an
arbitrary number of samples, referred to as LAT. This is a three-step
process: (i) the unit hypercube is partitioned into $N$ cells of equal
volume by a recursion following the golden section, (ii) one point is
placed in each cell, and (iii) the margins are Latinized by a matching of
the cells to the Latin bins, every point staying inside its cell. The
partition is defined and analysed for any $N$. Its recursion follows the
Fibonacci word, its infinite limit is aperiodic, and the sizes at which
the cells align on a grid are identified. The claims of the cells over the
Latin bins form a doubly stochastic matrix. The margins can thus be made
Latin with every point left uniform inside its cell and the estimator
unbiased.

Several findings stand out. The construction is simple. It only needs
integer splits, the cells of a level are cut in one vectorized operation,
and any cell can be generated alone from its index. The Latinization is of
first order and the cut rule of second order. Under a common matching the
golden cut is ahead of the balanced one, the more so in low dimension.
Aperiodicity makes the construction analysable and is not by itself a
source of accuracy. LAT is better than Monte Carlo in all dimensions
considered, and better than LHS as soon as the integrand has interactions.
The gain grows with the order of the interactions and fades in high
dimension, where all the methods converge at the Monte Carlo rate. On the
average squared discrepancy of \citet{clement2025}, LAT lies between the
scrambled sequences and LHS in moderate dimension, and all the designs
meet in high dimension. The accuracy of LAT does not depend on the sample size,
whereas a scrambled net truncated away from a power of two loses one to
two orders of magnitude on smooth functions. The nets remain better at the
powers of two, and a randomly shifted lattice rule on periodic integrands.

The cells are what makes LAT more than a design. A design can be refined
inside a region of interest without displacing a point placed elsewhere.
The exact volumes of the cells are quadrature weights, so the refined
design still integrates without bias. Splitting the best cells gives a
global search at no cost beyond the evaluations. Keeping the cells with
their centre inside a region samples a non-rectangular domain, and this
is competitive with rejection in low dimension. The quality of the design
is of prime importance as it determines the quality of the analysis of
the experiments. The proposed method provides an alternative to truncated
sequences, and to optimized designs when a design has to be rebuilt many
times or when the sample size is too large for a search.

Perspective for this work stands in the use that can be made of the cells
themselves. The integral over a sub-domain is the weighted sum of the
cells it covers, and the dispersion inside a cell is estimated from the
runs it holds. Further runs can then be allocated to the cells that
contribute the most to the variance, in the spirit of recursive stratified
integration \citep{press1990}. As the refinement leaves the points outside
the region untouched, this allocation can be decided once the campaign has
started. A second direction is the input distribution. Engineering inputs
are rarely uniform, and the construction only needs a rule for the number
of points a cell receives. Allocating the children by probability mass
instead of volume would stratify the input measure directly. The same rule
accommodates a design region defined by constraints, and weighting the
boundary cells would make the sampling of such regions unbiased. Finally,
the exact matching costs more than the greedy one, and an exact
Latinization at the cost of a sort would make the unbiased variant the
default. Estimating Sobol' indices, a cross-validation loop or a restarted
optimization all rebuild designs at whatever size the budget leaves
\citep{tissot2015}, and this is where a design defined at any size is most
useful.

\bibliographystyle{abbrvnat}
\bibliography{refs}

\end{document}